 \newcommand{\ord}{\textrm{ord}}
\newcommand{\abs}[1]{\vert #1 \vert}
\newcommand{\gen}[1]{\langle #1 \rangle}
\newcommand{\ket}[1]{\vert #1 \rangle}
\newcommand{\Comp}{\mathbb C}
\newcommand{\Int}{\mathbb Z}
\newcommand{\Irr}[2]{\textrm{Irr}(#1,#2)} 
\newcommand{\IrrChar}[1]{\textrm{Char}(#1)} 
\newcommand{\Subgroups}[1]{\mathscr{H}_{#1}} 
\newcommand{\normal}[1]{\overline{#1}} 
\newcommand{\MEMB}{\textrm{MEMB}} 
\newcommand{\ceil}[1]{\lceil #1 \rceil}
\newcommand{\st}{\:\vert\:}
\newtheorem{theorem}{Theorem}
\newtheorem{proposition}{Proposition}
\newtheorem{definition}{Definition}
\newtheorem{corollary}{Corollary}
\newcommand{\C}{\mathbb{C}}
\newcommand{\Z}{\mathbb{Z}}
\newcommand{\F}{\mathbb{F}} 
\newcommand{\Q}{\mathbb{Q}}
\newcommand{\poly}{\text{poly}}
\renewcommand{\vec}[1]{\mathbf{#1}}
\newcommand{\tr}{\text{tr}}
\DeclareMathOperator{\Exp}{\mathbb{E}}
\newcommand{\GL}{\textsf{GL}}
\begin{document}

\author{%
Scott Aaronson \\ \texttt{aaronson@csail.mit.edu} \\ Massachusetts Institute of Technology%
\and Fran{\c c}ois Le Gall \\ \texttt{legall@qci.jst.go.jp} \\
		Japan Science and Technology Agency%
\and Alexander Russell \\ \texttt{acr@cse.uconn.edu} \\University of Connecticut\\%
\hspace{65mm}
\and Seiichiro Tani \\ \texttt{tani@theory.brl.ntt.co.jp} \\
		Japan Science and Technology Agency \\
                    and
                    NTT Corporation
}
\title{The One-Way Communication Complexity\\ of Group Membership}

\maketitle

\begin{abstract}
This paper studies the one-way communication complexity of the \emph{subgroup membership problem}, a 
classical problem closely related to basic questions in quantum computing.  Here
Alice receives, as input, a subgroup $H$ of a finite group $G$; Bob receives an element $x \in G$.  Alice is permitted to send a single message to Bob, after which he must decide if his input $x$ is an element of $H$.
We prove the following upper bounds on the classical communication complexity
of this problem
in the bounded-error setting:
\begin{enumerate}
\item The problem can be solved with $O(\log |G|)$ communication, provided the subgroup $H$ is normal.
\item The problem can be solved with $O(d_{\max} \cdot \log |G| )$ communication, where $d_{\max}$ is the maximum of the dimensions of the irreducible complex representations of $G$.
\item For any prime $p$ not dividing $|G|$, the problem can be solved with $O(d_{\max} \cdot \log p )$ communication, where $d_{\max}$ is the maximum of the dimensions of the irreducible $\F_p$-representations of $G$.
\end{enumerate}
\end{abstract}

\section{Introduction}
\paragraph{Background}
The power of
quantum computing in various settings has been gradually clarified by
many researchers: some problems can be solved on quantum computers
much more efficiently than on classical computers, while others
cannot. \ One computational model that has been extensively studied in this light
is the communication complexity model. \
In particular, {\em one-way communication} is one of the  simplest settings
but it has rich connections to areas such as information theory, coding theory, on-line computing, and learning theory.
\ Therefore,
its quantum version has  then been  the  target of intensive research
\cite{Aaronson05, Iwama+ICALP07,KlauckSICOMP07, Gavinsky+STOC07}.

Let $f:X\times Y\to \{0,1\}$ be a Boolean function, where $X$ and $Y$
are arbitrary sets.  The one-way communication task associated to
$f$ is the following: Alice has an input $x\in X$, Bob has an input
$y\in Y$ and the goal is for Bob to output $f(x,y)$. The
assumption here is that only one message can be sent, from Alice to
Bob, and the communication cost of a protocol is the number of
bits of this message on the worst-case input.  We say that a protocol for $f$ has completeness error $\varepsilon$
if it outputs $1$ with probability at
least $1-\varepsilon$ whenever $f(x,y)=1$, and soundness error $\delta$ if it outputs $0$ with probability at
least $1-\delta$ whenever $f(x,y)=0$.  The one-way classical
bounded-error communication complexity of $f$, denoted by $R^1(f)$, is
the minimum, over all protocols $P$ for $f$ with completeness and soundness error $1/3$, of the
communication cost of $P$.  The one-way quantum bounded-error
communication complexity of $f$, denoted by $Q^1(f)$, is defined
similarly, but a quantum message
can be used
this time from Alice to Bob,
and the number of qubits of
the message
is considered (in this
paper we suppose that there is no prior entanglement and no shared
randomness between Alice and Bob).  Obviously for any function $f$,
the relation $Q^1(f)\le R^1(f)\le \ceil{\log_2\abs{X}}$ holds.

One of the main open problems in quantum communication complexity
is to understand how large the gap between $R^1$ and
$Q^1$ can be.  For partial functions (functions restricted to some
domain $R\subset X\times Y$ or, equivalently, functions with a promise
on their inputs), an exponential separation between
these two quantities has been shown recently
in~\cite{Gavinsky+STOC07}. \ However the situation for total functions is
far less clear: the largest gap known is an asymptotic factor of $2$
\cite{Winter04}.

In the exact setting, i.e.,
the setting where no error and no giving up are allowed,
the quantum and classical one-way communication
complexities are known to be the same for any total function~\cite{KlauckSICOMP07}. \ In the unbounded-error setting,
i.e., the setting where any error probability less than 1/2 is allowed,
it is known
that the gap is exactly a factor $2$ for both partial and total
functions~\cite{Iwama+ICALP07}. \
Although bounded-error is a notion between the exact and unbounded-error,
we stress that the bounded-error setting usually behaves quite differently from the other
two in the case of total functions, e.g., for two-way
communication there is a quadratic gap in the bounded-error setting~\cite{Kalyanasundaram+92SIDM,Aaronson+05} whereas in the exact setting no gap is known and,
in the unbounded-error setting, the gap is again exactly a factor 2~\cite{Iwama+ICALP07}.

Note also that for total functions in the bounded-error setting, quadratic gaps are known in the two-way model~\cite{Kalyanasundaram+92SIDM,Aaronson+05} and exponential gaps are known in the simultaneous message-passing model~\cite{Newman+96STOC,Buhrman+PRL01}; and these models are respectively stronger and weaker than the one-way model. \ Thus, whether a superlinear gap between $R^1$ and $Q^1$ can be achieved for some total function is an intriguing question.

\paragraph{The subgroup membership function}
Many of the problems for which quantum computation is more powerful than classical computation have group-theoretic structure. \
In particular, Watrous \cite{WatrousFOCS00} has used the subgroup membership problem (as a computational problem) to separate the complexity
classes $MA$ and $QMA$
relative to an oracle. \
Inspired by Watrous's work \cite{WatrousFOCS00}, we propose the subgroup membership function as a candidate to show a superlinear 
gap between $R^1$ and $Q^1$. \
Let $G$ be any finite group, and let $\Subgroups{G}$ denote the set of subgroups of
$G$. \ Then the subgroup membership function for $G$, denoted by $\MEMB_G$, is the function with
domain
$\Subgroups{G}\times G$  such that
$$\MEMB_G(H,y)=\left\{
\begin{array}{ll}
1 &\textrm{ if }y\in H\\
0 &\textrm{ if }y\notin H.
\end{array}
\right.$$
For any group $G$, the upper bound $\abs{\Subgroups{G}}\le 2^{(\log_2\abs{G})^2}$ follows
easily from the fact that any subgroup of $G$ is generated by at most $\log_2\abs{G}$ elements.\footnote{Borovik, Pyber, and Shalev~\cite{Borovik:1996it} have improved this naive bound to $|G|^{(1/4 + o(1)) \log_2 |G|}$.} \
Furthermore, there exist families of groups $G$ such that $\abs{\Subgroups{G}}=2^{\Omega((\log\abs{G})^2)}$: for example, the
abelian groups $G=\Int_2^r$ with $r \ge 1$. Thus there exist groups $G$ for which
the ``trivial protocol,'' wherein Alice simply sends Bob the name of her
subgroup, requires $\Theta((\log \abs{G})^2)$ communication. \
The one-way classical communication complexity of the function $\MEMB_G$ was previously considered by Miltersen et al.~\cite{Miltersen+JCSS98},
who showed that for the family of groups $G=\Int_2^r$, any one-way protocol with perfect soundness and completeness error $1/2$ requires $\Omega((\log\abs{G})^2)$-bit communication. \ For certain groups $G$, we conjecture that $\Omega((\log\abs{G})^2)$-bit communication is needed even if the completeness and soundness errors are both $1/3$.

On the other hand, there is a simple {\em quantum} one-way protocol, using $O(\log\abs{G})$-bit communication, by which Bob can compute
 $\MEMB_G$ with perfect completeness and constant soundness for any group $G$. \ In this protocol---inspired by \cite{WatrousFOCS00}---Alice sends the quantum state $\ket{H}=\abs{H}^{-1/2} \sum_{h\in H}\ket{h}$.
Bob then creates the state $\frac{1}{\sqrt 2}(\ket{H}\ket{0}+\ket{yH}\ket{1})$
where $\ket{yH}=\abs{H}^{-1/2}\sum_{h\in H}\ket{yh}$, applies a Hadamard gate on the
last register, and measures it in the basis $\{\ket{0},\ket{1}\}$ to decide which of $\ket{H}=\ket{yH}$ and $\langle H\ket{yH}=0$ holds.

Thus, proving that there exists a family of groups $G$ such that
$R^1(\MEMB_G)=\Omega((\log\abs{G})^2)$ would lead to a quadratic separation between
$R^1$ and $Q^1$ for a total function. \
In other words,  a major objective has been to prove a 2-sided-error version of the lower bound by Miltersen et al.~\cite{Miltersen+JCSS98}
mentioned above. \
Apart from the goal of proving a separation between $R^1$ and $Q^1$, we believe that the communication complexity
of deciding subgroup membership is interesting in itself, since the latter is a key task in most group-theoretic
computational problems.

\paragraph{Overview of our results}
In this paper we present three upper bounds on the one-way classical communication complexity of the subgroup membership function:
\begin{itemize}
\item
We give a classical protocol using
$\ceil{\log_2\abs{G}}$-bit communication, with perfect completeness and constant soundness, for the subgroup membership function
in the case where Alice's subgroup $H$ is normal. \ This suggests that in order to obtain a separation between $R^1$
and $Q^1$ using the subgroup membership problem, one must consider groups with many
nonnormal subgroups. 
We also present a lower bound which is tight for some families of groups.
Notice that this situation appears to be similar to the status of the Hidden Subgroup
Problem: there exists an efficient quantum algorithm solving the problem 
in the case where the
hidden subgroup is normal~\cite{Hallgren+SICOMP03}; without the
normality condition, however, very little is known.  Our results rely on the
theory of characters of finite groups and especially on the connection
between kernels of irreducible characters and normal subgroups, as did the
algorithms of~\cite{Hallgren+SICOMP03}.

\item Let $p$ be a prime not dividing $\abs{G}$. Then we show that
$R^1(\MEMB_G)=O(d_{\max}^p \cdot \log p )$, where $d_{\max}^p$ is the
maximum dimension of an irreducible $\F_p$-representation of $G$.
This result uses a beautiful
characterization of the subspaces of the group algebra $\F_p[G]$
stabilized by $H$.
We remark that for any group $G$ of exponent $m$ (which is to say that $g^m = 1$ for all $g \in G$), we have $d_{\max}^p \leq d_{\max}^0 \ord_m(p)$, where $d^0_{\max}$ is the maximum dimension of a complex irreducible representation of $G$ and $\ord_m(p)$ is the order of $p$ in $\Z_m^*$, the multiplicative group of the integers relatively prime to $m$. In particular, as there is always a prime $p$ of size $O(\log |G|)$ relatively prime to $|G|$, this protocol has complexity no more than $O(d_{\max}^0 \cdot m \cdot \log\log |G|)$.

\item
Finally, we show that $R^1(\MEMB_G)=O(d_{\max}^0 \cdot \log |G| )$, where
$d_{\max}^0$ is the maximum dimension of an irreducible complex representation of $G$. \
This upper bound is obtained by a protocol that mirrors the technique utilized in the modular case by suitably discretizing the vector space $\Comp^d$ and controlling ``geometric expansion'' around invariant spaces. \ One corollary is that any family of groups with an abelian subgroup of constant index has a protocol with complexity $O(\log |G|)$. In particular, for groups such as $G=\Z_2 \ltimes \Z_2^{n}$, the action of $\Z_2$ on $\Z_2^n$ being to reverse the order of the coordinates, we have $R^1(\MEMB_G)=O(\log |G|)$.

\end{itemize}

These results suggest a nontrivial connection between the representation theory of the group $G$ and 
the subgroup membership problem, and provide natural candidates for which a superlinear separation between $R^1(\MEMB_G)$ and $Q^1(\MEMB_G)$ 
may be obtained: groups with large irreducible representations and many nonnormal subgroups, e.g.,~the symmetric group.
\vspace{10pt}

\section{Preliminaries}
We assume the reader is familiar with basic
concepts of group theory. \ Here we introduce some notions from representation theory that we will need.
\ In this paper, $G$ always denotes a finite group and $1$ denotes its identity element.

\paragraph{Group representations}

Let $\F$ be a field whose characteristic does not divide the order of $G$ (so the characteristic of $\F$ could be zero).
An $\F$-\emph{representation} $\rho$ of $G$
is a homomorphism from $G$ to $\GL(V)$, the group of invertible linear transformations over a vector space $V$ (over the field $\F$).
The \emph{dimension} of $\rho$ is the dimension of $V$. We say that a representation $\rho: G \rightarrow \GL(V_\rho)$ is \emph{irreducible} if the only subspaces of $V_\rho$ simultaneously fixed by the entire family of linear operators $\rho(g)$ are the trivial ones: $\{\vec{0}\}$ and $V_\rho$.

The \emph{group algebra} $\F[G]$ is the $\F$-algebra of formal sums
$$
\sum_{g \in G} \alpha_g \cdot e_g\,,\quad \alpha_g \in \F\,,
$$
with coordinatewise addition and multiplication defined by linearly extending the rule $e_{g} \cdot e_{h} = e_{gh}$.
Note that $\F[G]$ has dimension $|G|$ as a vector space over $\F$.
The natural action of $G$ on the group algebra defines the \emph{regular representation}: the action of $x\in G$ on a vector
$\vec{v}=\sum_{g \in G} \alpha_g \cdot e_{g}$ in $\F[G]$ is denoted by $x\vec{v}$ and defined as
$$x\vec{v}=\sum_{g \in G} \alpha_g \cdot e_{xg}\,.$$

Now, if $H$ is a subgroup of $G$, let
$$
I_H = \{ \vec{v} \in \F[G] \mid \text{$h\vec{v} = \vec{v}$ for all $h \in H$} \}
$$
be the subspace of $H$-invariant vectors of $\F[G]$.
It is easy to check that a vector $\vec{v}$ lies in $I_H$ if and only if $\vec{v}$ is constant on each left coset of $H$ in $G$.
Let $\mathcal S_{G:H}$ be the set of right cosets of $H$ in $G$.
The vectors $\vec{v_S}=\sum_{g\in S} e_g$ for $S\in\mathcal S_{G:H}$ form a basis of $I_H$ and thus
$$
\dim I_H = [G:H]\,,
$$
where $[G:H]=\abs{\mathcal S_{G:H}}=\abs{G}/\abs{H}$ denotes the index of $H$ in $G$.

A theorem of Maschke's~(see, e.g.,~\cite{CR06,Serre77}) asserts that $\F[G]$ is semi-simple,
i.e.,~$\F[G]$ can be written as the direct sum of a family of irreducible representations. In this case, a theorem of Wedderburn's~\cite{Serre77,CR06} asserts that each irreducible representation appears with multiplicity equal to its dimension:
$$
\F[G]=\bigoplus_{\rho\in \Irr{G}{\F}}V_\rho^{\oplus d_\rho}\,,
$$
where $\Irr{G}{\F}$ denotes the set of (representatives of) all the irreducible $\F$-representations $\rho\colon G\to \GL(V_\rho)$ and $d_\rho$ denotes the dimension of $\rho$. 
If $I_H(\rho)$ is the subspace of $V_\rho$ pointwise fixed by $H$, we see that
$$
I_H =\bigoplus_{\rho\in \Irr{G}{\F}}[I_H(\rho)]^{\oplus d_\rho}\,,
$$
and conclude that
\begin{equation}
\label{eq:stable}
\sum_{\rho\in \Irr{G}{\F}} d_\rho \dim I_H(\rho) = [G:H]\,.
\end{equation}

\paragraph{Complex characters}

Let $\F$ be the complex field $\Comp$.
For any $\Comp$-representation $\rho$ of $G$, the character of $\rho$ is the function $\chi\colon G\to\Comp$ such that
$\chi(g)=\tr(\rho(g))$ for any $g\in G$, where $\tr$ denotes the trace. Characters are conjugacy class functions: the relation $\chi(gg'g^{-1})=\chi(g')$ holds for any two elements $g,g'$ of $G$.
Moreover,  the value $\chi(1)$ is the dimension of the representation $\rho$.
The kernel of $\chi$, denoted by $\ker(\chi)$, is defined as $\ker(\chi)=\{g\in G\st \chi(g)=\chi(1)\}$. It is easy to see that $\ker(\chi)$ is a subgroup of $G$.

A character is said to be irreducible if it is the character of an irreducible representation.
Denote by $\IrrChar{G}$ the set of irreducible (complex) characters of $G$.
The relation $\sum_{\chi\in \IrrChar{G}}[\chi(1)]^2=\abs{G}$ is well-known and
implies the inequality $\abs{\IrrChar{G}}\le \abs{G}$.
Let $H$ be a normal subgroup of $G$. Denote
$$\Lambda_H=\{\chi\in \IrrChar{G}\st H\le \ker(\chi)\}.$$
Then the relation
\begin{equation}\label{equation:char}
\sum_{\chi\in \Lambda_H}[\chi(1)]^2=[G:H]
\end{equation}
holds (see, e.g.,~\cite{Isaacs76}).

\section{Normal subgroups}
In this section we give an efficient classical protocol computing the subgroup membership function
in the special case where Alice's subgroup $H$ is normal. Our protocol is actually more general: 
we show that one can decide efficiently membership in the normal closure of $H$, denoted by $\normal{H}$ 
(the smallest normal subgroup of $G$ containing $H$).

The protocol testing normal closure membership, denoted by $\proc{NORM}(G)$, is as follows. 

\begin{codebox}
\Procname{Protocol $\proc{NORM}(G)$}
\zi \const{Alice's input:} a subgroup $H\in\Subgroups{G}$.
\zi \const{Bob's input:} an element $y\in G$.
\zi \const{Bob's output:} $z\in \{ 0,1\}$.
\li Alice chooses a random element $\mu$ of $\Lambda_{\normal{H}}$ with probability $[\mu(1)]^2\abs{\normal{H}}/\abs{G}$;
\li Alice sends the name of $\mu$ to Bob;
\li Bob outputs $1$ if $\mu(y)=\mu(1)$ and outputs $0$ otherwise.
	\End
\end{codebox}
\vspace{0mm}
Observe that by equation~\eqref{equation:char}, the weights of Step 1 do indeed determine a probability distribution.
Notice that $\abs{\Lambda_{\normal{H}}}\le\abs{G}$ since $\Lambda_{\normal{H}}\subseteq \IrrChar{G}$ and $\abs{\IrrChar{G}}\le \abs{G}$.
Thus Protocol $\proc{NORM}(G)$ can be implemented using $\ceil{\log_2\abs{G}}$ bits of communication.
We now show the correctness of this protocol.

\begin{proposition}
On any input $(H,y)$, Protocol $\proc{NORM}(G)$ outputs $1$ with probability $1$ if $y\in\normal{H}$, and outputs $0$
with probability at least $1/2$ if $y\notin\normal{H}$.
\end{proposition}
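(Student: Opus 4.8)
The plan is to verify completeness and soundness separately, the latter being the substantive part. For \textbf{completeness}, suppose $y\in\normal{H}$. Every character $\mu\in\Lambda_{\normal{H}}$ satisfies $\normal{H}\le\ker(\mu)$ by the very definition of $\Lambda_{\normal{H}}$, so $y\in\ker(\mu)$ and hence $\mu(y)=\mu(1)$, \emph{regardless} of which $\mu$ Alice draws in Step~1. Thus Bob outputs $1$ with probability $1$.

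For \textbf{soundness}, suppose $y\notin\normal{H}$. The first thing I would record is that for every character $\chi$ the set $\ker(\chi)$ is not merely a subgroup but a \emph{normal} subgroup of $G$: it coincides with the kernel of the underlying representation $\rho$, since $\chi(g)=\chi(1)$ forces all eigenvalues of $\rho(g)$ (a matrix of finite order, hence diagonalizable with roots-of-unity eigenvalues) to equal $1$, i.e.\ $\rho(g)=I$. Next, let $N$ be the normal closure in $G$ of the set $\normal{H}\cup\{y\}$, that is, the smallest normal subgroup of $G$ containing both $\normal{H}$ and $y$. Since $y\notin\normal{H}$, the inclusion $\normal{H}\subsetneq N$ is proper, so by Lagrange's theorem $|N|\ge 2|\normal{H}|$, equivalently $|\normal{H}|/|N|\le 1/2$.

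Now I would bound the probability that Bob (incorrectly) outputs $1$, which happens exactly when the drawn character $\mu$ satisfies $\mu(y)=\mu(1)$, i.e.\ $y\in\ker(\mu)$. If this holds for some $\mu\in\Lambda_{\normal{H}}$, then $\ker(\mu)$ is a normal subgroup containing both $\normal{H}$ and $y$, hence $N\le\ker(\mu)$, i.e.\ $\mu\in\Lambda_N$. Therefore
\[
\Pr[\text{Bob outputs }1]\;=\!\!\sum_{\substack{\mu\in\Lambda_{\normal{H}}\\ \mu(y)=\mu(1)}}\!\!\![\mu(1)]^2\,\frac{|\normal{H}|}{|G|}\;\le\;\sum_{\mu\in\Lambda_N}[\mu(1)]^2\,\frac{|\normal{H}|}{|G|}\;=\;[G:N]\cdot\frac{|\normal{H}|}{|G|}\;=\;\frac{|\normal{H}|}{|N|}\;\le\;\frac12,
\]
where the middle equality is equation~\eqref{equation:char} applied to the normal subgroup $N$. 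Hence Bob outputs $0$ with probability at least $1/2$, completing the proof.

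The argument involves no computational difficulty; the only step needing care is the passage from the $\Lambda_{\normal{H}}$-characters vanishing on $y$ to $\Lambda_N$, which relies on $\ker(\mu)$ being normal and on choosing $N$ as the normal closure of $\normal{H}\cup\{y\}$, so that the index $[G:N]$ drops by at least a factor of~$2$ relative to $[G:\normal{H}]$.
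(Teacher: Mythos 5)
Your proof is correct and takes essentially the same route as the paper: completeness is immediate, and soundness proceeds by taking the normal closure $N$ of $\normal{H}\cup\{y\}$, noting it properly contains $\normal{H}$, and bounding the error probability via equation~\eqref{equation:char} applied to $N$. The only slight difference is that you prove just the one inclusion (characters in $\Lambda_{\normal{H}}$ with $y$ in their kernel lie in $\Lambda_N$) rather than the full equality $B = \Lambda_N$ that the paper establishes, and you make explicit the normality of $\ker(\chi)$ that the paper leaves as a cited fact; both are fine, since the inclusion is all the bound requires.
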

\begin{proof}
If $y\in \normal{H}$, then for any element $\mu$ in $\Lambda_{\normal{H}}$ the equality $\mu(y)=\mu(1)$ holds.
Then Bob always outputs 1. Protocol $\proc{NORM}(G)$ has thus perfect completeness.

Now suppose that $y\notin \normal{H}$.
Denote $B=\{\chi\in \Lambda_{\normal{H}}\st \chi(y)=\chi(1)\}$. The error probability of the protocol
is 
$$
\frac{\abs{\normal{H}}}{\abs{G}}\sum_{\chi\in B}[\chi(1)]^2.
$$
To conclude our proof,  we now prove that $$\sum_{\chi\in B}[\chi(1)]^2\le \frac{\abs{G}}{2\abs{\normal{H}}}.$$
Let $K$ denote the normal closure of the
set $\normal{H}\cup\{y\}$ in $G$.
Remember that the normal closure of a set $S\subseteq G$ is  the smallest normal subgroup of $G$ including $S$,
and can be defined explicitly as the subgroup of $G$ generated by all the elements $gzg^{-1}$ for $g\in G$ and $z\in S$.
Since $y\notin \normal{H}$ the subgroup $\normal{H}$ is a proper subgroup of $K$. In particular $\abs{K}/\abs{\normal{H}}\ge 2$.
We now claim that $B=\Lambda_{K}$.
Then Equation (\ref{equation:char}) implies that
$$
\sum_{\chi\in B}[\chi(1)]^2=
\sum_{\chi\in \Lambda_{K}}[\chi(1)]^2=[G:K]\le \frac{\abs{G}}{2\abs{\normal{H}}}.$$
The proof of the claim follows.
First suppose that $\chi$ is an element of $\Lambda_{K}$. Then $\chi(y)=\chi(1)$ and thus $\chi\in B$.
Now suppose that $\chi$ is an element of $B$. Then $\normal{H}\cup\{y\}\subseteq \ker(\chi)$. From the basic properties of characters, we conclude that
$K\subseteq \ker(\chi)$ and thus $\chi \in \Lambda_{K}$.
\end{proof}

Given a finite group $G$, let $\Subgroups{G}^\ast$ be the set of normal subgroups of $G$.
\ Since for a normal subgroup $H$ of $G$ we have $\normal{H}=H$, we conclude that Protocol $\proc{NORM}(G)$ solves the restriction of
$\MEMB_G$ to the domain $\Subgroups{G}^\ast\times G$ (notice that this is still a total function).
\begin{theorem}\label{theorem_normal}
For any finite group $G$,
the restriction of $\MEMB_G$
to the domain  $\Subgroups{G}^\ast\times G$ 
can be computed
with perfect completeness and soundness error $1/2$
by communicating at most $\ceil{\log_2\abs{G}}$ bits.
\end{theorem}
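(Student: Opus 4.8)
The plan is to derive Theorem~\ref{theorem_normal} as an essentially immediate consequence of the preceding Proposition together with the communication-cost bookkeeping already recorded for Protocol $\proc{NORM}(G)$; all of the substantive work has been done, and what remains is to assemble it and specialize to normal subgroups. First I would observe that if $H \in \Subgroups{G}^\ast$, i.e.\ $H$ is normal in $G$, then $\normal{H} = H$, so membership in the normal closure of $H$ literally coincides with membership in $H$ itself. Consequently, running Protocol $\proc{NORM}(G)$ on an input $(H,y)$ with $H$ normal decides exactly whether $y \in H$, i.e.\ it computes the restriction of $\MEMB_G$ to the domain $\Subgroups{G}^\ast \times G$.

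Next I would invoke the Proposition directly: on any input $(H,y)$ the protocol outputs $1$ with probability $1$ whenever $y \in \normal{H} = H$, and outputs $0$ with probability at least $1/2$ whenever $y \notin \normal{H} = H$. This is precisely perfect completeness together with soundness error $1/2$ for the restricted problem, so the correctness half of the theorem follows with no further argument.

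Finally I would account for the communication. In the protocol Alice transmits only the name of a character $\mu \in \Lambda_{\normal{H}} \subseteq \IrrChar{G}$, and Bob's acceptance test $\mu(y) = \mu(1)$ needs no additional bits. Since $\sum_{\chi \in \IrrChar{G}} [\chi(1)]^2 = |G|$ forces $\abs{\IrrChar{G}} \le \abs{G}$, we have $\abs{\Lambda_{\normal{H}}} \le \abs{G}$, so any member of $\Lambda_{\normal{H}}$ can be specified by an index of $\ceil{\log_2 \abs{G}}$ bits; hence the message length is at most $\ceil{\log_2 \abs{G}}$, as claimed. I do not expect a genuine obstacle in this proof: the one point deserving a sentence of care is that the Step~1 weights $[\mu(1)]^2 \abs{\normal{H}}/\abs{G}$ actually form a probability distribution over $\Lambda_{\normal{H}}$, which is exactly equation~\eqref{equation:char} applied with the normal subgroup $\normal{H}$ in place of $H$, and this has already been noted when the protocol was introduced. (Efficiently sampling from this distribution is a computational issue orthogonal to the communication count and not part of the theorem's claim, so I would leave it aside.)
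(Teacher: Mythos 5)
Your argument is exactly the paper's: note that $\normal{H}=H$ when $H$ is normal, invoke the Proposition for perfect completeness and soundness error $1/2$, and bound the message length by $\ceil{\log_2\abs{G}}$ via $\abs{\Lambda_{\normal{H}}}\le\abs{\IrrChar{G}}\le\abs{G}$. This matches the paper's derivation of Theorem~\ref{theorem_normal} step for step, so there is nothing to add.
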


We now show a simple lower bound on the communication complexity of $\MEMB_G$.
We first recall the definition of the VC-dimension of a set of functions \cite{Vapnik+71}.
\begin{definition}
Let $\Sigma$ be a set of Boolean functions over a finite domain $Y$. We say that a set $S\subseteq Y$ is shattered
by $\Sigma$ if for every subset $R\subseteq S$ there exists a function $\sigma_R \subseteq \Sigma$ such that $\forall y \in S,  (\sigma_R(y) =1$  if and only if $y\in R)$.
The largest size of set $S$ over all $S$ shattered by $\Sigma$
is the VC-dimension of $\Sigma$,
denoted by $V C(\Sigma)$.
\end{definition}

We say that a subset $S$ of a finite group $G$ is an independent subset of $G$ if, for each $g\in S$, 
element $g$ cannot be written as any product of elements of $S\backslash \{g\}$.
We denote by $\gamma(G)$ the maximal size of an independent subset of $G$. 
Notice that, for any finite group $G$, the inequality $\gamma(G)\le \log_2\abs{G}$ holds. We now state our lower bound.
\begin{proposition}\label{proposition:lower}
$Q^1(\MEMB_G)=\Omega(\gamma(G))$.
In particular, the family of groups $G=\Int_2^r$ for $r\ge 1$ satisfies $Q^1(\MEMB_G)=\Omega(\log\abs{G})$.
\end{proposition}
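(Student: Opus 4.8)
The approach is to embed the one-way \textsc{Index} problem into $\MEMB_G$ and then appeal to the known quantum lower bound for \textsc{Index}. Recall $\textsc{Index}_n$: Alice holds $x\in\{0,1\}^n$, Bob holds $i\in\{1,\ldots,n\}$, and Bob must output $x_i$; by Nayak's lower bound on quantum random access codes~\cite{Nayak99} one has $Q^1(\textsc{Index}_n)=\Omega(n)$. The plan is to show $VC(\Sigma_G)\ge\gamma(G)$, where $\Sigma_G=\{\MEMB_G(H,\cdot)\mid H\in\Subgroups{G}\}$, and then to observe that a shattered set of size $d$ yields a one-way reduction from $\textsc{Index}_d$, so that $Q^1(\MEMB_G)\ge Q^1(\textsc{Index}_{\gamma(G)})=\Omega(\gamma(G))$.

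First I would isolate the key combinatorial fact: \emph{any independent subset $S=\{g_1,\ldots,g_k\}$ of $G$ is shattered by $\Sigma_G$}. We may assume $k\ge 2$, the bound being trivial otherwise, and then $1\notin S$ (if $1\in S$, pick any $t\in S\setminus\{1\}$ of order $m$; then $1=t^m$ exhibits $1$ as a product of elements of $S\setminus\{1\}$, contradicting independence of the element $1$). Given $R\subseteq S$, put $H_R=\gen{R}\in\Subgroups{G}$. Since $G$ is finite, $H_R$ is exactly the set of inverse-free products of elements of $R$, because $r^{-1}=r^{\ord(r)-1}$ for each $r\in R$. Hence, for $g_i\in S$: if $g_i\in R$ then $g_i\in H_R$ trivially; and if $g_i\notin R$ then $R\subseteq S\setminus\{g_i\}$, so $g_i\in H_R$ would express $g_i$ (a non-identity element) as a nonempty product of elements of $S\setminus\{g_i\}$, which independence forbids. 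Thus $\MEMB_G(H_R,g_i)=1$ exactly when $g_i\in R$, so $S$ is shattered and $VC(\Sigma_G)\ge\gamma(G)$.

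Next I would spell out the reduction. Let $S=\{g_1,\ldots,g_k\}$ with $k=\gamma(G)$ be as above. Given any one-way quantum protocol for $\MEMB_G$ of cost $c$, build a protocol for $\textsc{Index}_k$: on input $x$, Alice runs the $\MEMB_G$ protocol with subgroup $H_{R(x)}$, where $R(x)=\{g_j\mid x_j=1\}$, and forwards its message; on input $i$, Bob applies the $\MEMB_G$ receiver with element $g_i$ and outputs its verdict, which equals $\MEMB_G(H_{R(x)},g_i)=x_i$ with the same success probability. Hence $Q^1(\MEMB_G)\ge Q^1(\textsc{Index}_k)=\Omega(\gamma(G))$. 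For $G=\Int_2^r$, the standard basis is an independent set (a product of basis vectors is their mod-$2$ sum, and $e_i$ is never the sum of distinct $e_j$ with $j\ne i$), so $\gamma(\Int_2^r)\ge r=\log_2\abs{G}$ and $Q^1(\MEMB_{\Int_2^r})=\Omega(\log\abs{G})$.

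The one step that is not elementary is the quantum lower bound $Q^1(\textsc{Index}_n)=\Omega(n)$, which — in contrast to the classical case, where a counting/information argument suffices — rests on Nayak's analysis of quantum random access codes; this is the only place where the word ``quantum'' does real work. The remaining ingredients are routine, the single mild subtlety being the appeal to finiteness of $G$ to rewrite $\gen{R}$ as a set of products without inverses, so that the independence hypothesis — which speaks only of products — applies.
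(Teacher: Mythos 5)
Your proof is correct and follows essentially the same route as the paper: both reduce to showing that a maximal independent set $S$ is shattered by $\Sigma=\{f_H\}$, so that $VC(\Sigma)\ge\gamma(G)$. The only difference is presentational — you inline the quantum lower bound for shattered sets (the reduction to $\textsc{Index}$ together with Nayak's random access code bound), whereas the paper cites Klauck's theorem $Q^1(f)\ge(1-h(1/3))\cdot VC(\Sigma)$ as a black box, which is proved by exactly that reduction; your care about the edge cases ($1\notin S$ when $|S|\ge 2$, and using finiteness of $G$ to rewrite $\gen{R}$ as inverse-free products) fills in details the paper dismisses as ``easy to show.''
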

\begin{proof}
For each subgroup $H\in \Subgroups{G}$, define the function $f_H:G\to\{0,1\}$ as $f_H(y)=\MEMB_G(H,y)$ for every $y\in G$.
Denote $\Sigma=\{f_H\:|\: H\in\Subgroups{G}\}$. 
A result by Klauck  \cite{KlauckSICOMP07} shows that $Q^1(\MEMB_G)\ge (1-h(1/3))\cdot VC(\Sigma)$, where $h$ is the binary entropy function.

Let $g_1,\ldots,g_{\gamma(G)}$ be distinct elements of $G$ such that $S=\{g_1,\ldots,g_{\gamma(G)}\}$
is a subset of independent elements of $G$.
The subset $S\subseteq G$ is shattered by $\Sigma$
since it is easy to show that, for any subset $R\subseteq S$, the function $f_{\gen{R}}$ is such that $\forall y \in S,  f_{\gen{R}}(y) =1$  if and only if $y\in R$ 
(here $\gen{R}$ denotes the subgroup generated by the elements in R).
Then $VC(\Sigma)\ge \gamma(G)$ and $Q^1(\MEMB_G)\ge (1-h(1/3))\cdot \gamma(G)$.

The second part of the proposition follows from the observation that each group $\Int_2^r$ is also a vector space of dimension $r$ over the finite field $\Int_2$ and, 
thus, $\gamma(\Int_2^r)=r=\log_2(\abs{\Int_2^r})$.
\end{proof}
Proposition~\ref{proposition:lower} shows that, for the family of groups $G=\Int_2^r$, Protocol $\proc{NORM}(G)$
is optimal up to a constant factor.

\section{Algorithms for groups with small modular representations}\label{section:modular}
In this section we present a protocol computing the group membership function for groups with small modular representations. \ 
Let $\F_q$ be a finite field with  characteristic $p$ not dividing $|G|$.
Our protocol, denoted by $\proc{MOD-REP}(G,\F_q)$, is the following.

\begin{codebox}
\Procname{Protocol $\proc{MOD-REP}(G,\F_q)$}
\zi \const{Alice's input:} a subgroup $H\in\Subgroups{G}$.
\zi \const{Bob's input:} an element $y\in G$.
\zi \const{Bob's output:} $z\in \{0,1 \}.$
\li 
Alice chooses a representation $\rho:G\to\GL(V_\rho)$ in $\Irr{G}{\F_q}$ with probability
$
\frac{|H| \cdot d_\rho \cdot \dim I_H(\rho)}{|G|}\,;
$
\li
Alice chooses a random vector $\vec{v} \in I_H(\rho)\subseteq V_\rho$;
\li 
Alice sends the name of $\rho$ and the vector $\vec{v}$ to Bob;
\li Bob outputs 1 if $\rho(y)\vec{v}=\vec{v}$ and outputs 0 otherwise.
\End
\end{codebox}\vspace{0mm}
Observe that by equation~\eqref{eq:stable}, the weights of Step 1 do indeed determine a probability distribution.

We now show the correctness of this protocol.
\begin{theorem}
Let $G$ be a finite group and $\F_q$ be a finite field of characteristic $p$ not dividing $|G|$.
Protocol $\proc{MOD-REP}(G,\F_q)$ computes $\MEMB_G$ with perfect completeness and constant soundness error.
Its communication complexity is at most $\ceil{\log_2\abs{G}}+d^q_{\max}\cdot \ceil{\log_2 q}$ bits,
where $d^q_{\max}$ is the maximum dimension of an irreducible $\F_q$-representation of $G$.
\end{theorem}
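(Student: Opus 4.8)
The plan is to dispatch the three assertions—perfect completeness, the message-length bound, and constant soundness—separately; only soundness requires real work.

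\emph{Completeness and communication.} For completeness, note that if $y\in H$ then by definition every vector of $I_H(\rho)$ is pointwise fixed by $\rho(y)$, so $\rho(y)\vec v=\vec v$ regardless of the $\rho$ and $\vec v$ Alice draws, and Bob outputs $1$ with certainty. (That the Step-1 weights form a probability distribution is precisely equation~\eqref{eq:stable}, as already observed.) For the message length I would argue that the name of $\rho$ identifies an element of $\Irr{G}{\F_q}$ and that the Wedderburn decomposition $\F_q[G]=\bigoplus_\rho V_\rho^{\oplus d_\rho}$ forces $\sum_\rho d_\rho^2=\abs G$, hence $\abs{\Irr{G}{\F_q}}\le\abs G$ and the name costs $\le\ceil{\log_2\abs G}$ bits; the vector $\vec v\in V_\rho$ is then specified by its $d_\rho\le d^q_{\max}$ coordinates over $\F_q$, costing $\le d^q_{\max}\ceil{\log_2 q}$ bits.

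\emph{Soundness.} Fix $(H,y)$ with $y\notin H$ and let $K=\gen{H,y}$. The crux is the identity, valid for every $\rho\in\Irr{G}{\F_q}$,
$$\{\vec v\in I_H(\rho)\mid \rho(y)\vec v=\vec v\}=I_K(\rho)\,,$$
since a vector fixed by $H$ and by $y$ is fixed by all of $K$ and conversely. Consequently, conditioned on Alice's choice of $\rho$, Bob errs precisely when her uniform random $\vec v\in I_H(\rho)$ lands in the subspace $I_K(\rho)$—probability $q^{\dim I_K(\rho)-\dim I_H(\rho)}$—and averaging against the Step-1 weights $d_\rho\dim I_H(\rho)/[G:H]$ gives
$$\mathrm{err}=\frac{1}{[G:H]}\sum_{\rho\in\Irr{G}{\F_q}} d_\rho\,\dim I_H(\rho)\cdot q^{\dim I_K(\rho)-\dim I_H(\rho)}\,.$$
To bound this by a constant below $1$ I would split the sum by whether $I_H(\rho)=I_K(\rho)$. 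Writing $A=\sum_{\rho:\,I_H(\rho)=I_K(\rho)}d_\rho\dim I_K(\rho)$, equation~\eqref{eq:stable} applied to $K$ gives $A\le[G:K]$. The terms with $I_H(\rho)=I_K(\rho)$ have zero exponent and total exactly $A$; every other $\rho$ has $I_K(\rho)\subsetneq I_H(\rho)$ strictly, so its exponent is $\le-1$ and (as $q\ge2$) its term is $\le\tfrac12 d_\rho\dim I_H(\rho)$, so by~\eqref{eq:stable} for $H$ these terms sum to $\le\tfrac12([G:H]-A)$. Hence
$$\mathrm{err}\cdot[G:H]\le A+\tfrac12\bigl([G:H]-A\bigr)=\tfrac12[G:H]+\tfrac12 A\le\tfrac12\bigl([G:H]+[G:K]\bigr),$$
so $\mathrm{err}\le\tfrac12+\tfrac12\,\abs H/\abs K\le\tfrac34$, the last step since $y\notin H$ makes $H$ a proper subgroup of $K$, forcing $\abs K\ge2\abs H$. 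A constant number of independent repetitions (Bob accepting iff all copies accept) then lowers the soundness error below any prescribed constant while keeping perfect completeness and the stated asymptotic communication cost.

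\emph{Main obstacle.} The one genuinely nontrivial point—and the step I expect to be the crux—is identifying the bad set for each $\rho$ as exactly $I_K(\rho)$ and then balancing the two halves of the split: the representations on which $H$ and $K$ share an invariant subspace receive no $1/q$ damping and can only be controlled globally through~\eqref{eq:stable} for $K$, and it is the trivial bound $[K:H]\ge2$ that makes the pieces combine to a constant strictly below $1$; everything else is bookkeeping with~\eqref{eq:stable}.
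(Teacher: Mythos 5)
Your proof is correct and follows essentially the same route as the paper's: both use equation~\eqref{eq:stable} for $H$ and for $K=\gen{H,y}$ to control $\dim I_K(\rho)/\dim I_H(\rho)$ on average, identify the bad set (where Bob wrongly accepts) as exactly $I_K(\rho)$, and exploit the fact that a uniformly random vector of $I_H(\rho)$ lands in the strictly smaller subspace $I_K(\rho)$ with probability at most $1/q$. The paper phrases this as ``with probability at least $1/2$ over $\rho$ we have $I_K(\rho)\ne I_H(\rho)$, and then $\vec v\in I_K(\rho)$ with probability at most $1/q$''; you instead write out the error probability exactly as $[G:H]^{-1}\sum_\rho d_\rho\dim I_H(\rho)\,q^{\dim I_K(\rho)-\dim I_H(\rho)}$ and split the sum, which is just a more explicit bookkeeping of the same argument and lands on the same $3/4$ bound. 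One small remark: the closing comment about amplification by repetition is unnecessary for the theorem as stated (which only claims a constant soundness error), and literal repetition would not match the theorem's stated exact bit bound of $\ceil{\log_2\abs G}+d^q_{\max}\ceil{\log_2 q}$, but this is cosmetic and does not affect correctness.
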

\begin{proof}
Note that the protocol is clearly complete: if $y \in H$, then Bob always accepts.

To establish soundness, let $y\notin H$ and define $K = \langle H, y\rangle$, the smallest subgroup containing both $H$ and $y$.
Remember that $I_K(\rho)$ denotes the subspace of $V_\rho$ pointwise fixed by $K$.
We see that
$$
\Exp_\rho \left [  \frac{\dim I_K(\rho)}{\dim I_H(\rho)} \right] = \sum_\rho \frac{|H| d_\rho \dim I_K(\rho)}{|G|} = \frac{|H|}{|K|} =  \frac{1}{[K:H]} \leq \frac{1}{2}\, ,
$$
again by equation~\eqref{eq:stable}. Observe, then, that $I_K(\rho) \subseteq I_H(\rho)$ and so
$$
 \Exp_\rho \left [  \frac{\dim I_K(\rho)}{\dim I_H(\rho)}\right]\ge \Pr[I_K(\rho) = I_H(\rho)].
$$
Then
$$
\Pr[I_K(\rho) \neq I_H(\rho)] = 1-\Pr[I_K(\rho) = I_H(\rho)]\ge 1/2\,.
$$
When $I_K(\rho) \neq I_H(\rho)$, the vector $\vec{v}$ chosen by Alice has probability no more than $1/q$ to be in $I_K(\rho)$.
Then $\rho(y) \vec{v} \neq \vec{v}$ with constant probability in her choices of $\rho$ and $\vec{v}$. 

Since $\abs{\Irr{G}{\F_q}}\le \abs{G}$, the communication complexity of the protocol is  
at most $\ceil{\log_2\abs{G}}+d^q_{\max}\cdot \ceil{\log_2{q}}$.
\end{proof}



In light of the complexity guarantee of the protocol above, it is natural to ask how the dimensions of the irreducible representations of a finite group $G$ compare over various fields and, especially, how the modular case compares to the 
complex case.
When the group algebras involved are semi-simple (as they are in this paper due to our insistence that $p \not|\:\: |G|$), there is a tight connection expressed in the following proposition. 


\begin{proposition}
Let $G$ be a finite group of exponent $m$ and $p$ be any prime not dividing $|G|$. Then the relation $d_{\max}^p \leq d_{\max}^0 \ord_m(p)$ holds, 
where $d^0_{\max}$ is the maximum dimension of a complex irreducible representation of $G$,  $d^p_{\max}$ is the maximum dimension of an irreducible $\F_p$-representation of $G$,
and $\ord_m(p)$ is the order of $p$ in $\Z_m^*$, the multiplicative group of the integers relatively prime to $m$.
\end{proposition}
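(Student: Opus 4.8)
The plan is to compare the Wedderburn decompositions of $\C[G]$ and $\F_p[G]$ by passing through a common ground field, namely a cyclotomic field of $m$-th roots of unity. First I would recall that since $G$ has exponent $m$, every complex irreducible representation of $G$ can be realized over the field $\Q(\zeta_m)$, where $\zeta_m$ is a primitive $m$-th root of unity (this is Brauer's theorem; in fact one only needs that character values are sums of $m$-th roots of unity, together with the Schur index, but for an upper bound Brauer's result is the clean tool). Thus $d_{\max}^0$ equals the maximum dimension of an irreducible $\Q(\zeta_m)$-representation of $G$. On the modular side, since $p \nmid |G|$ the algebra $\F_p[G]$ is semisimple, and I would base-change to $\F_q$ where $q = p^{\ord_m(p)}$: the key point is that $\ord_m(p)$ is exactly the degree $[\F_p(\zeta_m):\F_p]$, so $\F_q$ contains a primitive $m$-th root of unity and hence is a splitting field for $G$ (again because the exponent is $m$). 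Therefore $d_{\max}^p \le d_{\max}^{q}$, and it suffices to bound $d_{\max}^q$ — the maximum dimension of an absolutely irreducible $\F_q$-representation — by $d_{\max}^0$.

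The heart of the argument is then a dimension-counting comparison between the absolutely irreducible representations over $\Q(\zeta_m)$ and over $\F_q = \F_p(\zeta_m)$. Both $\Q(\zeta_m)[G]$ and $\F_q[G]$ are split semisimple, so each decomposes as a product of full matrix algebras, $\prod_i M_{n_i}(\Q(\zeta_m))$ and $\prod_j M_{m_j}(\F_q)$ respectively, and in both cases the multiset of block sizes is determined: the number of blocks equals the number of absolutely irreducible characters, and $\sum n_i^2 = |G| = \sum m_j^2$. The cleanest way to see that the two multisets of dimensions coincide is via a decomposition/reduction map: reducing a $\Z[\zeta_m]$-lattice form of each complex irreducible modulo a prime of $\Z[\zeta_m]$ above $p$ gives, by Brauer–Nesbitt and the fact that $p \nmid |G|$, a genuine semisimple (indeed irreducible, by a counting argument since the number of $p$-modular irreducibles over a splitting field equals the number of ordinary irreducibles when $p\nmid|G|$) $\F_q$-representation of the same dimension; so the map $\rho \mapsto \bar\rho$ is a dimension-preserving bijection between the two sets of absolutely irreducibles. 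Hence $d_{\max}^q = d_{\max}^0$, and combining with $d_{\max}^p \le d_{\max}^q$ gives the claimed inequality $d_{\max}^p \le d_{\max}^0 \cdot \ord_m(p)$ — in fact with the factor $\ord_m(p)$ absorbed into the single step $d_{\max}^p \le d_{\max}^q$, where the factor $\ord_m(p) = [\F_q:\F_p]$ enters because an irreducible $\F_p$-representation of dimension $d$ becomes, after extending scalars to $\F_q$, a sum of Galois-conjugate absolutely irreducible pieces each of $\F_q$-dimension at least $d/\ord_m(p)$.

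The step I expect to be the main obstacle is the reduction-modulo-$p$ comparison: making precise that reducing an ordinary irreducible gives an irreducible modular representation of the same dimension, and that this is a bijection. The slick route is to invoke the standard fact from modular representation theory (e.g. Serre, \emph{Linear Representations of Finite Groups}, Part III) that when $p \nmid |G|$ the decomposition matrix is the identity, so reduction mod $p$ induces a bijection between ordinary and $p$-modular irreducibles preserving dimensions — but one must be careful about splitting fields on both ends, which is exactly what the choice $q = p^{\ord_m(p)}$ and the exponent-$m$ hypothesis are arranged to handle. An alternative, more self-contained route avoids reduction entirely: bound $d_{\max}^q$ directly by noting every absolutely irreducible $\F_q$-character takes values that are sums of $m$-th roots of unity, lift the character via the (set-theoretic) identification of $m$-th roots of unity in $\overline{\F_p}$ with those in $\C$, and check the lifted class function is a nonnegative-integer combination of ordinary irreducible characters of total degree matching — but this requires verifying the lift is genuinely a character, which again reduces to a Brauer-character argument. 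Either way the representation-theoretic input is classical and I would cite it; the only genuinely new content is the packaging in terms of the exponent $m$ and $\ord_m(p)$.
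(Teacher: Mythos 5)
Your proposal is correct and takes essentially the same route as the paper's proof (which is deferred to Appendix~\ref{sec:cde}): realize complex irreducibles over $\Q(\zeta_m)$ via Brauer's theorem, reduce modulo a prime of $\Z[\zeta_m]$ above $p$ to obtain a split semisimple decomposition over $\F_q$ with $q = p^{\ord_m(p)}$ and preserved dimensions, and then observe that each irreducible $\F_p$-representation splits over $\F_q$ into at most $\ord_m(p)$ Galois-conjugate absolutely irreducible constituents, whence the factor $\ord_m(p)$. One notational slip worth fixing: the displayed inequality ``$d_{\max}^p \le d_{\max}^q$'' is the reverse of what extension of scalars gives; the correct statement, which you supply yourself in the parenthetical remark, is $d_{\max}^p \le \ord_m(p)\cdot d_{\max}^q$.
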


\begin{proof}
This is a consequence of the ``$c$-$d$-$e$ triangle'' (see~\cite{Serre77}). See Appendix~\ref{sec:cde} for a brief discussion.
\end{proof}

As there always exists a prime $p$ of size $O(\log|G|)$ that does not divide $|G|$, we obtain the following corollary.
\begin{corollary}
$R^1(\MEMB_G)=O(d_{\max}^0 \cdot m \cdot \log\log |G|)$, where $m$ denotes the exponent of $G$ and $d^0_{\max}$ is the maximum dimension of a complex irreducible representation of $G$.
\end{corollary}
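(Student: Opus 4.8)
The corollary follows by instantiating Protocol $\proc{MOD-REP}(G,\F_q)$ with $q$ a suitably small prime and then bounding the two summands of its communication cost via the preceding Theorem on $\proc{MOD-REP}$ and the $c$-$d$-$e$ Proposition; no new protocol is needed. The plan is to carry this out in three short steps.

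First I would fix a prime $p$ with $p\nmid|G|$ and $p=O(\log|G|)$. Such a prime exists because $|G|$ has fewer than $\log_2|G|$ distinct prime divisors, so it suffices to produce that many primes of size $O(\log|G|)$; concretely, any explicit Chebyshev-type lower bound on the primorial---say $\prod_{q\le x}q=e^{\Theta(x)}$, or equivalently the bound $p_k=O(k\log k)$ on the $k$-th prime---shows that for $x=\Theta(\log|G|)$ one has $\prod_{q\le x}q>|G|$, and hence not every prime $q\le x$ divides $|G|$. We may moreover insist $p\ge3$, so that the soundness analysis of $\proc{MOD-REP}$ is nondegenerate, and then $\lceil\log_2 p\rceil=O(\log\log|G|)$. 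Second, I would invoke the Theorem asserting correctness of $\proc{MOD-REP}(G,\F_p)$: it computes $\MEMB_G$ with perfect completeness and constant soundness error using at most $\lceil\log_2|G|\rceil+d^p_{\max}\cdot\lceil\log_2 p\rceil$ bits, where $d^p_{\max}$ is the maximum dimension of an irreducible $\F_p$-representation of $G$. Repeating the protocol a constant number of times and accepting only when every run accepts preserves perfect completeness and reduces the soundness error below $1/3$, multiplying the cost by only a constant. Third, I would bound $d^p_{\max}$ using the Proposition: since $m=\exp(G)$ divides $|G|$ and $p\nmid|G|$, we have $\gcd(p,m)=1$, so $\ord_m(p)$ is defined and divides $\phi(m)$, whence $\ord_m(p)<m$; together with $d^p_{\max}\le d^0_{\max}\cdot\ord_m(p)$ this gives $d^p_{\max}<d^0_{\max}\cdot m$. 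Substituting this bound and $\lceil\log_2 p\rceil=O(\log\log|G|)$ into the cost estimate yields $R^1(\MEMB_G)=O\big(\log|G|+d^0_{\max}\cdot m\cdot\log\log|G|\big)$, which is the stated bound once the $\log|G|$ summand is subsumed into the second.

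I do not anticipate any real obstacle: the statement is essentially an arithmetic assembly of facts already established. The two points meriting a little care are the prime-existence estimate---which should rest on an explicit Chebyshev bound rather than an asymptotic appeal to the prime number theorem---and the handling of the additive $\lceil\log_2|G|\rceil$ term, which is the cost of naming $\rho$. That term is dominated by $d^0_{\max}\cdot m\cdot\log\log|G|$ unless $G$ has simultaneously very small exponent and only very small irreducible representations (the family $G=\Z_2^r$ being the natural example), so for full precision the bound is most safely recorded as $R^1(\MEMB_G)=O\big(\log|G|+d^0_{\max}\cdot m\cdot\log\log|G|\big)$.
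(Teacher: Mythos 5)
Your proof is correct and follows the same route the paper intends: pick a prime $p\nmid|G|$ with $p=O(\log|G|)$, run $\proc{MOD\text{-}REP}(G,\F_p)$ (amplified by a constant number of repetitions to drive the soundness error below $1/3$), and bound $d^p_{\max}$ via the $c$-$d$-$e$ proposition together with $\ord_m(p)\leq\phi(m)<m$. Your closing caveat is not a defect in your argument but a real imprecision in the statement itself: the paper's corollary silently drops the additive $\lceil\log_2|G|\rceil$ term used to name $\rho$, and for $G=\Z_2^r$ the quantity $d^0_{\max}\cdot m\cdot\log\log|G|$ is $O(\log r)$, which contradicts the $\Omega(\log|G|)=\Omega(r)$ lower bound of the paper's own Proposition~\ref{proposition:lower}. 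So your more careful formulation $R^1(\MEMB_G)=O\bigl(\log|G|+d^0_{\max}\cdot m\cdot\log\log|G|\bigr)$ is the one that actually follows from the cited theorem and proposition; the paper's version should be read as implicitly assuming the second summand dominates.
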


\section{Algorithms for groups with small $\C$-representations}
We now focus on the case where the dimensions of the irreducible $\Comp$-representations of $G$ is under control.
The key idea is to discretize the protocol given in the previous section. To achieve this goal we use the concept of an
$\epsilon$-net of a sphere. (As our nets will lie in the vector spaces acted upon by the irreps of $G$, we define them as subsets of complex Hilbert spaces.)
\begin{definition}
Let $V$ be a finite-dimensional complex Hilbert space.
An \emph{$\epsilon$-net} of $V$ is a finite family of unit-vectors $N\subseteq V$ so that for every unit-length vector $\vec{w} \in V$, there is a vector $\vec{n} \in N$ so that 
$|\langle \vec{n}, \vec{w}\rangle|^2 >  1- \epsilon^2$.
\end{definition}

\begin{proposition}\label{prop:net}
For any $\epsilon>0$ and for any complex Hilbert space $V$ of dimension $d$, there exists an $\epsilon$-net of size at most $(4/\epsilon)^{2d}$.
\end{proposition}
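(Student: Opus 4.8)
The plan is to bound the size of an $\epsilon$-net by a standard volume/packing argument on the unit sphere of $V$, viewed as a $2d$-dimensional real Euclidean space of norm equal to $1$ in the Hilbert norm. First I would translate the covering condition $|\langle \vec{n},\vec{w}\rangle|^2 > 1-\epsilon^2$ into an ordinary metric condition: if $\vec{n}$ and $\vec{w}$ are unit vectors with $|\langle \vec{n},\vec{w}\rangle|^2 > 1-\epsilon^2$, then after multiplying $\vec{n}$ by a suitable phase we have $\operatorname{Re}\langle \vec{n},\vec{w}\rangle > \sqrt{1-\epsilon^2}$, hence $\|\vec{n}-\vec{w}\|^2 = 2 - 2\operatorname{Re}\langle \vec{n},\vec{w}\rangle < 2 - 2\sqrt{1-\epsilon^2} \le 2\epsilon^2$, so $\|\vec{n}-\vec{w}\| < \sqrt{2}\,\epsilon$. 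Thus it suffices to produce a set $N$ of unit vectors that is $\sqrt{2}\epsilon$-dense in the Euclidean metric on the unit sphere $S$ of $V \cong \mathbb{R}^{2d}$; since phases are free, a net in the Euclidean metric immediately yields one in the Hermitian-overlap sense.

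Next I would take $N$ to be a maximal $\delta$-separated subset of $S$ with $\delta = \sqrt{2}\epsilon$ (so the open balls of radius $\delta/2$ about points of $N$ are disjoint, while maximality forces every point of $S$ to be within $\delta$ of some point of $N$, giving the covering property). To bound $|N|$, I would use the packing-into-a-shell trick: the disjoint balls $B(\vec{n},\delta/2)$ for $\vec{n}\in N$ all lie in the annulus $\{\vec{x} : 1-\delta/2 \le \|\vec{x}\| \le 1+\delta/2\}$ of $\mathbb{R}^{2d}$, so comparing volumes and using that volume scales as the $2d$-th power of the radius,
$$
|N| \cdot (\delta/2)^{2d} \le (1+\delta/2)^{2d} - (1-\delta/2)^{2d} \le (1+\delta/2)^{2d},
$$
whence $|N| \le (2/\delta + 1)^{2d} \le (4/\delta)^{2d}$ (using $\delta \le 2$, which we may assume since otherwise $N$ can be a single point). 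Substituting $\delta = \sqrt{2}\epsilon$ gives $|N| \le (4/(\sqrt{2}\epsilon))^{2d} = (2\sqrt{2}/\epsilon)^{2d} \le (4/\epsilon)^{2d}$, as claimed. For the degenerate case $\epsilon \ge 1$ (or more generally when the stated bound already exceeds, say, a trivial net), one checks the bound holds trivially.

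I do not expect a serious obstacle here — this is a textbook sphere-covering estimate — but the one point requiring care is the passage between the Hermitian overlap condition and a genuine metric condition, and in particular making sure the phase freedom is handled so that a Euclidean net on the real sphere of dimension $2d$ really does certify the Hermitian-overlap net property with the same cardinality. A secondary bookkeeping point is choosing the separation parameter $\delta$ so that the final constant comes out to $4$ rather than something slightly larger; the computation above shows $\delta = \sqrt{2}\epsilon$ suffices with a comfortable margin, and if one is willing to be slightly lossy one can simply take $\delta = \epsilon$ and absorb constants. I would present the argument with $\delta = \sqrt{2}\epsilon$ to land exactly on the stated bound.
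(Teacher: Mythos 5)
Your overall plan — a volume/packing bound on the real sphere $S^{2d-1}$ together with a translation between the Euclidean metric and the Hermitian overlap condition — mirrors the paper's proof, which simply cites the standard covering number for $S^{n-1}$ instead of deriving it. The gap is in the translation step: you prove the implication in the \emph{wrong direction} and consequently pick a separation parameter that is too coarse. You show that $|\langle\vec{n},\vec{w}\rangle|^2 > 1-\epsilon^2$ implies, after a phase rotation, $\|\vec{n}-\vec{w}\| < \sqrt{2}\,\epsilon$, and then say ``thus it suffices'' to take a $\sqrt{2}\epsilon$-dense Euclidean net. But the covering property you need is the \emph{converse}: given any unit $\vec{w}$, you must produce $\vec{n}\in N$ whose \emph{overlap} is large, so you must show that Euclidean closeness at your chosen scale implies the overlap bound. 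With $\delta=\sqrt{2}\,\epsilon$ this implication is false: from $\|\vec{n}-\vec{w}\|\le\sqrt{2}\,\epsilon$ one gets, even after choosing the optimal phase so that $\langle\vec{n},\vec{w}\rangle$ is real and nonnegative, only $|\langle\vec{n},\vec{w}\rangle|\ge 1-\epsilon^2$, hence $|\langle\vec{n},\vec{w}\rangle|^2\ge(1-\epsilon^2)^2 = 1-2\epsilon^2+\epsilon^4$, which is \emph{strictly less} than $1-\epsilon^2$ for every $0<\epsilon<1$. So the set $N$ you construct is not certified to be an $\epsilon$-net. Your closing remark that $\delta=\sqrt{2}\,\epsilon$ ``suffices with a comfortable margin'' is therefore exactly backwards: since you need Euclidean-close to force overlap-close, a \emph{smaller} $\delta$ is safe, not a larger one.

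The repair is the one-line orthogonal decomposition the paper uses. Write $\vec{n}=\langle\vec{n},\vec{w}\rangle\vec{w}+\vec{r}$ with $\langle\vec{r},\vec{w}\rangle=0$; then $\vec{n}-\vec{w}=(\langle\vec{n},\vec{w}\rangle-1)\vec{w}+\vec{r}$ gives $\|\vec{r}\|\le\|\vec{n}-\vec{w}\|$ by Pythagoras, and from $1=\|\vec{n}\|^2=|\langle\vec{n},\vec{w}\rangle|^2+\|\vec{r}\|^2$ we conclude $|\langle\vec{n},\vec{w}\rangle|^2 \ge 1-\|\vec{n}-\vec{w}\|^2$. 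Thus an ordinary Euclidean $\epsilon$-net on $S^{2d-1}$ is already an $\epsilon$-net in the overlap sense (no phase adjustment needed), and your maximal-separated-set count with $\delta=\epsilon$ still lands on $|N|\le(4/\epsilon)^{2d}$, as required.
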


\begin{proof}
For any dimension $d$ and distance $\epsilon > 0$, there is a set of points $A \subset S^{d-1}$ of cardinality no more than $(4/\epsilon)^d$ with the property that every point of $S^{d-1}$ has distance no more than $\epsilon$ from some point of $A$ (see, e.g., \cite[\S3.1]{Matousek02}). This yields a set with analogous properties of size no more than $(4/\delta)^{2d-1}$ for the complex $d$-sphere, which has the same metric as the real $2d-1$ sphere. Note that if $\vec{v}$ and $\vec{w}$ are two unit vectors of $V$, we may write $\vec{v} = \langle\vec{v},\vec{w}\rangle \vec{w} + \vec{r}$ with $\langle \vec{r},\vec{w}\rangle = 0$ in which case, $\|\vec{r}\| \leq  \| \vec{v} - \vec{w}\|$. The statement of the proposition follows. 
\end{proof}

Our protocol requires the choice of a sufficiently dense $\epsilon$-net for each irreducible representation in $\Irr{G}{\Comp}$.
This choice is independent of the inputs of the protocol and so can be done by Alice and Bob without communication.  
The protocol is as follows. 

\begin{codebox}
\Procname{Protocol $\proc{COMP-REP}(G,\epsilon)$}
\zi \const{Alice's input:} a subgroup $H\in\Subgroups{G}$
\zi \const{Bob's input:} an element $y\in G$
\zi \const{Bob's output:} $z\in \{0,1 \}.$
\li Alice and Bob agree on an $\epsilon$-net $N_\rho$ of $V_\rho$ for each $\rho:G\to\GL(V_\rho)$ in $\Irr{G}{\Comp}$;
\li 
Alice chooses a representation $\rho\colon G\to\GL(V_\rho)$ in $\Irr{G}{\Comp}$ with probability
$
\frac{|H| \cdot d_\rho \cdot \dim I_H(\rho)}{|G|}\,;
$
\li
Alice chooses a random (according to Haar measure) unit length vector $\vec{v} \in I_H(\rho)\subseteq V_\rho$;
\li
Alice sends Bob the name of $\rho$ and the closest vector $\vec{n}$ in $N_\rho$ to the vector $\vec{v}$;
\li 
If $|1 - \langle \rho(y) (\vec{n}), \vec{n}\rangle| \leq 2\epsilon$, then Bob outputs 1;
\zi
Otherwise $|1 - \langle \rho(y)(\vec{n}),\vec{n}\rangle| > 2\epsilon$, and Bob outputs 0. 
\End
\end{codebox}\vspace{0mm}
Observe that by equation~\eqref{eq:stable}, the weights at Step 2 do indeed determine a probability distribution on $\Irr{G}{\Comp}$. \
Ideally, at Step 3,  Alice would communicate $\vec{v}$ to Bob: Bob could then check if $\rho(y) (\vec{v}) = \vec{v}$ and, if so, would figure that $y \in H$.  
If $\rho(y)(\vec{v}) \neq \vec{v}$, Bob would be sure that $y \not \in H$, since $I_H(\rho)$ is precisely the fixed space of $H$. The proof below shows that by sending a sufficiently close approximation to $\vec{v}$, Bob can still answer confidently.

The following theorem states the correctness and the communication complexity of this protocol.

\begin{theorem}\label{th:complex}
There exists a choice of $\epsilon_G$ such that 
Protocol $\proc{COMP-REP}(G,\epsilon_G)$ computes $\MEMB_G$ with perfect completeness and constant soundness error by communicating $O(d^0_{\max}\cdot \log \abs{G})$ bits,
where $d^0_{\max}$ denotes the maximum dimension of an irreducible $\Comp$-representation of $G$.
\end{theorem}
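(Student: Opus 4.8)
The plan is to discretize the analysis of Protocol $\proc{MOD-REP}$, replacing the exact linear-algebra argument over $\F_q$ with a metric argument over $\Comp$. The completeness side is easy: if $y \in H$, then $\rho(y)(\vec{v}) = \vec{v}$ exactly for the true Haar-random vector $\vec{v} \in I_H(\rho)$. I need to check that rounding $\vec{v}$ to the nearest net point $\vec{n}$ does not break this. Since $\|\vec{v} - \vec{n}\|$ is small (bounded in terms of $\epsilon$ by the net property) and $\rho(y)$ is unitary, $\|\rho(y)(\vec{n}) - \vec{n}\| \le \|\rho(y)(\vec{n}) - \rho(y)(\vec{v})\| + \|\vec{v} - \vec{n}\| = 2\|\vec{v}-\vec{n}\|$, which I can make at most $2\epsilon$ by choosing the net fine enough; a short computation relating $|1 - \langle \rho(y)(\vec{n}), \vec{n}\rangle|$ to this norm then gives the acceptance condition of Step 5.

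For soundness, let $y \notin H$ and set $K = \langle H, y \rangle$. As in the modular proof, equation~\eqref{eq:stable} and the weighting of Step 2 give
$$
\Exp_\rho\left[\frac{\dim I_K(\rho)}{\dim I_H(\rho)}\right] = \frac{1}{[K:H]} \le \frac12,
$$
so with probability at least $1/2$ over the choice of $\rho$ we have $I_K(\rho) \subsetneq I_H(\rho)$; call such a $\rho$ \emph{good}. For a good $\rho$, a Haar-random unit vector $\vec{v}\in I_H(\rho)$ lies, with high probability, \emph{far} from the proper subspace $I_K(\rho)$ — concretely, the probability that $\vec{v}$ is within angular distance $\delta$ of $I_K(\rho)$ is polynomially small in $\delta$ with exponent governed by $\dim I_H(\rho) - \dim I_K(\rho) \ge 1$. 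Since $I_K(\rho)$ is exactly the fixed space of $\rho(y)$ inside $I_H(\rho)$ (on the relevant part), being $\delta$-far from $I_K(\rho)$ forces $\rho(y)(\vec{v})$ to differ from $\vec{v}$ by at least some $c(\delta) > 0$; I then need $\epsilon$ small enough (relative to $c(\delta)$, with $\delta$ itself chosen as a function of $G$) that the rounding $\vec{v} \mapsto \vec{n}$ cannot shrink this gap below the $2\epsilon$ threshold. Choosing $\epsilon_G$ and an intermediate $\delta_G$ as explicit small functions of $|G|$ (it suffices that they be inverse-polynomial in $|G|$, or even cruder) makes both the completeness and soundness bounds go through with constant error, possibly after a constant number of independent repetitions folded into one message.

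For the communication bound: by Proposition~\ref{prop:net} each net $N_\rho$ has size at most $(4/\epsilon_G)^{2 d_\rho} \le (4/\epsilon_G)^{2 d^0_{\max}}$, so naming a net point costs $O(d^0_{\max} \log(1/\epsilon_G))$ bits; naming $\rho$ costs $\ceil{\log_2 |G|}$ bits since $|\Irr{G}{\Comp}| \le |G|$. As long as $1/\epsilon_G$ is at most polynomial in $|G|$, this is $O(d^0_{\max} \cdot \log|G|)$ total, as claimed.

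The main obstacle is the geometric estimate in the soundness argument — controlling the "geometric expansion" of $\rho(y)$ near its fixed space, i.e., showing that a Haar-random vector in $I_H(\rho)$ is, with constant probability, not merely outside $I_K(\rho)$ but quantitatively bounded away from it, and that this separation survives rounding to the net. This requires (i) an anti-concentration bound for the Haar measure on a sphere near a proper subspace, and (ii) a careful bookkeeping of how the three parameters $\epsilon_G$, $\delta_G$, and the number of repetitions interact, so that a single constant-error protocol with the stated communication results. Everything else — completeness, the expectation identity, and the net-size bound — is essentially a transcription of the modular case plus Proposition~\ref{prop:net}.
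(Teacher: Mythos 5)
Your overall architecture matches the paper's: weighted sampling of $\rho$ via equation~\eqref{eq:stable}, a Haar-random vector in $I_H(\rho)$, an $\epsilon$-net from Proposition~\ref{prop:net}, perfect completeness from $\rho(y)\vec{v}=\vec{v}$ plus unitary stability, and soundness via the expectation identity. The communication accounting is also right. But the proof as sketched has a genuine gap at its core, and you have mis-located where the difficulty actually sits.

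You identify the ``main obstacle'' as an anti-concentration estimate: showing a Haar-random $\vec{v}\in I_H(\rho)$ is, with constant probability, bounded away from $I_K(\rho)$. That part is actually the easy step --- the paper dispatches it in two lines with $\Exp_{\vec{v}}[\|\vec{v}_y\|^2]=\dim I_K(\rho)/\dim I_H(\rho)$ and Markov. The step you then assert without proof, namely that ``being $\delta$-far from $I_K(\rho)$ forces $\rho(y)(\vec{v})$ to differ from $\vec{v}$ by at least some $c(\delta)>0$,'' is the entire technical content of the soundness argument, and it is not automatic. Writing $\vec{v}=\vec{v}_y+\vec{v}'$ with $\vec{v}'\perp I_K(\rho)$ and $\vec{v}'\in I_H(\rho)$, the issue is that $\rho(y)$ does not preserve $I_H(\rho)$, and a priori $\vec{v}'$ could have a large component in the \emph{full} fixed space of $\rho(y)$ in $V_\rho$ (which strictly contains $I_K(\rho)$), so that $\langle\vec{v}',\rho(y)\vec{v}'\rangle$ is close to $\|\vec{v}'\|^2$ even though $\vec{v}'\perp I_K(\rho)$. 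It is only true that $\vec{v}'$ cannot be \emph{exactly} in that fixed space (else it would be $K$-invariant), but turning that qualitative fact into a bound of size $1/\poly|G|$ --- which is what the choice $\epsilon_G=1/\poly|G|$ requires --- is a real theorem, not bookkeeping.

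The paper proves this via a spectral gap argument that is entirely absent from your proposal: since $\vec{v}'$ is $H$-invariant, one may insert the averaging operator $S_A=\frac{1}{|A|}\sum_{a\in A}\sigma(a)$ for free, reducing the quantity to the operator norm of $\frac12\bigl(\sigma(y)S_A+S_A\sigma(y^{-1})\bigr)$ on nontrivial $K$-irreps $\sigma$ --- which is the normalized adjacency operator of the Cayley graph $\mathrm{Cay}(K, yA\cup Ay^{-1})$. One then needs Erd\H{o}s--R\'enyi to pick generators $A$ of $H$ of size and diameter $O(\log|H|)$, and Babai's theorem relating second eigenvalue to diameter, yielding a gap $\Omega(1/([K:H]^2\log^3|H|))$. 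None of these ingredients (the reduction to a Cayley graph, the role of $K$-irreducibles, the diameter-to-spectral-gap step) appear in your sketch, and without some quantitative substitute the argument does not close. (There is a more elementary route than the paper's, using only that the displacement $g\mapsto\|\rho(g)\vec{v}'-\vec{v}'\|$ is subadditive, vanishes on $H$, and averages to $\|\vec{v}'\|$ over $K$ since $\vec{v}'\perp I_K(\rho)$, giving $\|\rho(y)\vec{v}'-\vec{v}'\|\geq\|\vec{v}'\|/\Delta$ for $\Delta$ the $y$-diameter of $K$ over $H$; but you would still have to \emph{say} something of this kind --- the assertion as written is a placeholder, not a proof.)

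A smaller point: your completeness sketch bounds $|1-\langle\rho(y)\vec{n},\vec{n}\rangle|$ by roughly $2\|\vec{n}-\vec{v}\|\lesssim 2\sqrt2\,\epsilon$, which overshoots Bob's $2\epsilon$ threshold. The paper instead decomposes $\vec{n}=\langle\vec{n},\vec{v}\rangle\vec{v}+\vec{r}$ with $\|\vec{r}\|\leq\epsilon$ and gets $\epsilon^2+\epsilon\leq 2\epsilon$ cleanly. Also, the paper's protocol is a single shot; no repetition is needed or folded in.
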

\begin{proof}
As the name of the representation $\rho$ can be encoded using $\ceil{\log_2{\abs{G}}}$ bits,
the communication complexity of the protocol will be dominated by the number of bits necessary to encode the vector $\vec{n}$.
We will show that a choice $\epsilon=\epsilon_G=\Omega(1/(|G|^2 \poly \log |G|))$ suffices to achieve perfect completeness and constant soundness.
According to Proposition~\ref{prop:net}, such an $\epsilon$-net 
can be indexed with $O(d_\rho \log |G|)$ bits.
This gives our upper bound.

We proceed with the analysis of the completeness and soundness of the protocol.

\medskip
\noindent{\bf Completeness}
Observe that if $y \in H$, then the vector $\vec{v}$ chosen by Alice in the protocol is fixed by $\rho(y)$. Recall that Alice sends Bob a vector $\vec{n}$ for which $|\langle \vec{n}, \vec{v}\rangle|^2 \geq 1 - \epsilon^2$; writing
$$
\vec{n} = \langle \vec{n}, \vec{v}\rangle \vec{v} + \vec{r}
$$
(where $\langle \vec{r}, \vec{v} \rangle = 0$) we have
$$
1 = \langle \vec{n}, \vec{n} \rangle = |\langle \vec{n}, \vec{v}\rangle|^2 + \langle\vec{r}, \vec{r}\rangle
$$
and $\| \vec{r} \| \leq \epsilon$. Considering that
$$
\langle \rho(y) \vec{n}, \vec{n} \rangle = \left| \langle \vec{n}, \vec{v}\rangle \right|^2 + \langle \rho(y) \vec{r}, \vec{n}\rangle
$$
we conclude that
\[
\left| 1 - \langle \rho(y)\vec{n}, \vec{n}\rangle \right| =  \left| 1 - | \langle \vec{n}, \vec{v}\rangle|^2  - \langle \rho(y) \vec{r}, \vec{n}\rangle\right|  \leq  \left( 1 - | \langle \vec{n}, \vec{v}\rangle|^2\right) + \left|\langle \rho(y) \vec{r}, \vec{n}\rangle\right| \leq \epsilon^2 + \left|\langle \rho(y) \vec{r}, \vec{n}\rangle\right|\,.
\]
Recall that $\rho(y)$ is unitary, so that $\| \rho(y) \vec{r}\| = \| \vec{r}\|$.
Then, by the Cauchy-Schwarz inequality,
$$
\left| 1 - \langle \rho(y)\vec{n}, \vec{n}\rangle \right| 
\leq \epsilon^2 + \| \vec{r} \| \leq \epsilon^2 + \epsilon\,.
$$
As $\epsilon < 1$, we have $\epsilon^2 + \epsilon \leq 2\epsilon$ and it follows that the protocol has perfect completeness.

\medskip
\noindent{\bf Soundness} We wish to  show that for sufficiently small $\epsilon$ $(= 1/\poly{|G|})$, the protocol has constant soundness. Assume that $y \not \in H$ and let $K = \langle H, y\rangle$, the smallest subgroup containing $H$ and $y$. Our goal will be to show that with constant probability $\langle \vec{v}, \rho(y) \vec{v} \rangle$ is far from $1$, in which case the same can be said of $\vec{n}$ so long as $\epsilon$ is sufficiently small.

From equation~\eqref{eq:stable},
$$
\Exp_\rho \left [  \frac{\dim I_K(\rho)}{\dim I_H(\rho)} \right] = \sum_\rho \frac{|H| d_\rho \dim I_K(\rho)}{|G|} = \frac{|H|}{|K|} =  \frac{1}{[K:H]} \leq \frac{1}{2}\, .
$$
\begin{sloppypar}
Then, with constant probability, the subspace of $I_H(\rho)$ fixed by $y$ has dimension no more than $2/3 \cdot \dim I_H(\rho)$.
We may write the vector $\vec{v}\in I_H(\rho)$ as $\vec{v} = \vec{v}_y + \vec{v}'$, where $\vec{v}_y \in I_K(\rho)$ and $\vec{v}' \in[ I_K(\rho)]^\perp$, the space perpendicular to $I_K(\rho)$. 
We then have $\rho(y) \vec{v}_y = \vec{v}_y$ and $\vec{v}_y \in I_K(\rho) \subset I_H(\rho)$. Now, as $\vec{v}$ is chosen uniformly on the unit sphere in $V_\rho$, we have $\Exp_\vec{v}[\| \vec{v}_y\|^2] =  \dim I_K(\rho) / \dim I_H(\rho)$ and the probability
\end{sloppypar}
$
\Pr_{\rho,\vec{v}}[\|\vec{v}'\|^2 \ge 1/6]
$
is lower bounded by a constant.\footnote{Of course, when $\dim I_K(V_\rho) < 2/3 \dim I_H(V_\rho)$, the random variable $\|v'\|^2$ possesses much stronger concentration around the expected value than this.}
We wish to conclude that, conditioned on the event $\|\vec{v}'\|^2 \geq 1/6$, the value
$$
\frac{\langle \vec{v}', \rho(y) \vec{v}' \rangle}{\| \vec{v}'\|^2}
$$
cannot be too close to 1. We will show, in fact, that the real part is appropriately bounded below 1.

Consider the restriction of the representation $\rho: G\to \GL(V_\rho)$ chosen by Alice to the subgroup $K$: specifically, we may decompose $V_\rho$ as an orthogonal direct sum of $K$-invariant subspaces:
\begin{equation*}
V_\rho = \bigoplus_{i} W_{\sigma_i}\,,
\end{equation*}
where each $\sigma_i$ is in $\Irr{K}{\Comp}$ (but copies of the same irrep may appear several times in the direct sum). 
In this decomposition, $\vec{v}_y$ is precisely the projection of $\vec{v}$ into the subspace $\bigoplus_{i\colon \sigma_i=1} W_{\sigma_i}$ corresponding to the copies of the trivial representation; 
$\vec{v}'$, on the other hand, lies solely in $\bigoplus_{i\colon\sigma_i\neq 1} W_{\sigma_i}$. 
As both $\vec{v}$ and $\vec{v}_y$ lie in $I_H(\rho)$, the difference $\vec{v}'$ does as well and the projection of $\vec{v}'$ into each $W_{\sigma_i}$ is $H$-invariant (that is, lies in $I_H(\sigma_i)$). With this in mind, we shall upper bound
$$
\frac{\Re \langle \vec{v}', \rho(y) \vec{v}'\rangle }{\| \vec{v}' \|^2}
$$
by controlling
\begin{equation*}
\lambda_y \triangleq \max_{\sigma \neq 1} \max_{\vec{w}\in I_H(\sigma)} \frac{\Re \langle \vec{w}, \sigma(y) \vec{w}\rangle}{\|\vec{w}\|^2}
\end{equation*}
taken over all nontrivial irreps $\sigma$ of $K$ and all $H$-invariant vectors $\vec{w}$ in $W_\sigma$. 
In particular, writing $\vec{v}' = \sum_{i\colon \sigma_i\neq 1} \vec{v}'_i$ 
(with each $\vec{v}'_i$ lying in $W_{\sigma_i}$), we have $\| \vec{v}' \|^2 = \sum_{i\colon \sigma_i\neq 1} \| \vec{v}'_i\|^2$ and
$$
\Re \langle \vec{v}', \rho(y) \vec{v}'\rangle = \sum_{i\colon \sigma_i\neq 1} \Re \langle \vec{v}'_i, \rho(y) \vec{v}'_i \rangle \leq 
\sum_{i\colon \sigma_i\neq 1} \lambda_y \| \vec{v}'_i \|^2 = \lambda_y \| \vec{v}' \|^2\,.
$$

Observe that if $A$ is a set of generators for $H$ and $\vec{w}$ is an $H$-invariant vector of $W_\sigma$,
$$
\langle \vec{w}, \sigma(y) \vec{w}\rangle = \langle \vec{w}, \sigma(y) S_A \vec{w}\rangle
$$
where $S_A = S^\sigma_A= \frac{1}{|A|} \sum_{a \in A} \sigma({a})$. Then
$$
\lambda_y \leq \max_{\sigma \neq 1} \max_{\vec{w}\in W_\sigma} \frac{\Re \langle \vec{w}, \sigma(y) S_A \vec{w}\rangle}{\|\vec{w}\|^2}\,.
$$
(Note that the vector $\vec{w}$ is not constrained to be $H$-invariant in this expression.)
If we choose $A$ to be a symmetric generating set (so that $a \in A \Leftrightarrow a^{-1} \in A$) then
$S_A$ is self-adjoint and $\sigma(y)$ is unitary so that 
\begin{equation*}
\max_{\sigma \neq 1} \max_{\|\vec{w}\|=1} \Re \langle \vec{w}, \sigma(y) S_A \vec{w}\rangle = 
\max_{\sigma \neq 1} \max_{\|\vec{w}\|=1} \frac{1}{2} \Bigl[\langle \vec{w}, \sigma(y) S_A \vec{w}\rangle + \langle \vec{w}, S_A\sigma(y^{-1})  \vec{w}\rangle\Bigr].
\end{equation*}
As the operator $\sigma(y) S_A + S_A \sigma(y^{-1})$ is Hermitian, we have
\begin{equation*}
\max_{\sigma \neq 1} \max_{\|\vec{w}\|=1} \Re \langle \vec{w}, \sigma(y) S_A \vec{w}\rangle
\le \max_{\sigma \neq 1}  \left\| \frac{\sigma(y) S_A + S_A \sigma(y^{-1})}{2}\right\|
\end{equation*}
where $\| \cdot \|$ denotes the operator norm. 

In order to control this operator norm, observe that the linear operator $(1/2)\left[ \sigma(y) S_A + S_A \sigma(y^{-1})\right]$ is precisely given by the left action of the group algebra element
\begin{equation}\label{eq:walk}
\mathbf{[A,y]} \triangleq \frac{1}{2 |A|} \left[\sum_{a \in A} e_{ya}+ \sum_{a \in A} e_{ay^{-1}} \right] \in \C[K]
\end{equation}
on the invariant subspace $W_\sigma$ of $\C[K]$ corresponding to the representation $\sigma$. Alternatively, we may consider the Cayley graph on the group $K$ given by the symmetric generating (multi-)set $yA \cup Ay^{-1}$. The (normalized) adjacency matrix of this Cayley graph is identical to the regular representation evaluated at the group algebra element~\eqref{eq:walk} above. As $yA \cup Ay^{-1}$ is a (symmetric) generating set for $K$, the operator norm of $\sigma(\mathbf{[A,y]})$ is bounded below 1 for each nontrivial $\sigma$ (see, e.g.,~\cite{Hoory:2006yq}). In order to conclude the proof, we require explicit bounds on this spectral gap.

A result of Erd\H{o}s and Renyi~\cite{Erdos:1965kx} asserts that we may select a set of generators $A$ for $H$ of size $O(\log |H|)$ so that the diameter of the resulting Cayley graph (generated by $A$ over $H$) is $O(\log |H|)$. Considering that the diameter of $A$ (as generators for $H$) is $O(\log |H|)$, it is easy to see that the set $yA \cup Ay^{-1}$ induces a Cayley graph on $K$ of diameter no more than $O([K : H] \log |H|)$.

Now we may invoke a theorem of Babai~\cite{Babai:1991kx} asserting that the second eigenvalue of any (undirected) Cayley graph with degree $d$ and diameter $\Delta$ is no more than $d - \Omega(1/\Delta^2)$. (If we normalize the adjacency matrix by degree, the second eigenvalue is no more than $1 - \Omega(1/(d \Delta^2))$.) We conclude that
\begin{equation*}
\Re \frac{\langle \vec{v}', \rho(y)\vec{v}'\rangle}{\| \vec{v}' \|^2} \leq  \lambda_y \leq \max_{\sigma \neq 1}  \left\| \frac{\sigma(y) S_A + S_A \sigma(y^{-1})}{2}\right\| \leq 1 - \Omega\left(\frac{1}{[K:H]^2 \log^3 |H|}\right)
\end{equation*}
and, considering that $\|\vec{v}'\|^2 \geq 1/6$, that
$$
\Re \langle \vec{v}, \rho(y)\vec{v} \rangle \le
\|\vec{v_y}\|^2+\frac{\Re \langle \vec{v}', \rho(y)\vec{v}'\rangle }{\| \vec{v}' \|^2}\cdot \| \vec{v}' \|^2
\leq 1 - \Omega\left(\frac{1}{[K:H]^2 \log^3 |H|}\right)\,.
$$
Finally, Alice's $\vec{n}$ can be written $\vec{n} = \vec{v} +\vec{r}$ with $\|\vec{r}\| \leq \epsilon$, in which case
$$
|\langle \vec{n}, \rho(y)\vec{n} \rangle| \leq 1 - \Omega\left(\frac{1}{[K:H]^2 \log^3 |H|}\right) + 3\epsilon \leq 1 - 2\epsilon\,,
$$
for $\epsilon^{-1} = \Omega([K:H]^2 \log^3 |H|)$; thus the protocol is sound.
\end{proof}
In particular, Theorem \ref{th:complex} shows that, over groups for which $d^0_{\max}$ is constant, the subgroup membership problem can be solved 
using $O(\log\abs{G})$-bit communication.
There is a very beautiful characterization of such groups: a family of groups has representations of bounded degree 
if and only each group of the family has an abelian subgroups of constant index~\cite{Gluck:1985la}. 
We thus obtain the following corollary.
\begin{corollary}
Let $G$ be a family of groups each possessing an abelian subgroup of constant index. Then $R^1(\MEMB_G)=O(\log\abs{G}).$
\end{corollary}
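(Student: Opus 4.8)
The plan is to derive this corollary directly from Theorem~\ref{th:complex} together with the structural result of Gluck~\cite{Gluck:1985la}. The key observation is that Theorem~\ref{th:complex} gives communication cost $O(d^0_{\max} \cdot \log |G|)$, so it suffices to argue that $d^0_{\max} = O(1)$ uniformly over any family of groups in which each member has an abelian subgroup of index bounded by a fixed constant $c$.

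First I would recall the elementary bound from representation theory: if $A \le G$ is an abelian subgroup of index $[G:A] = c$, then every irreducible complex representation of $G$ has dimension at most $c$. This follows because the restriction of any irreducible $\rho$ of $G$ to $A$ decomposes into one-dimensional representations (since $A$ is abelian), and by Frobenius reciprocity $\rho$ is a constituent of a representation induced from a one-dimensional representation of $A$, whose dimension is exactly $[G:A] = c$; hence $d_\rho \le c$. (Alternatively, one may cite the theorem of Ito, which gives $d_\rho \mid [G:A]$ for $A$ abelian and normal, and combine with passage to a normal subgroup of bounded index; but the induced-representation argument already suffices for an $O(1)$ bound.) Thus for every group $G$ in the family, $d^0_{\max} \le c = O(1)$.

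Next I would plug this into Theorem~\ref{th:complex}: choosing $\epsilon_G$ as in that theorem, Protocol $\proc{COMP-REP}(G,\epsilon_G)$ computes $\MEMB_G$ with perfect completeness and constant soundness error, using $O(d^0_{\max} \cdot \log |G|) = O(c \cdot \log |G|) = O(\log |G|)$ bits of communication. Since constant soundness error can be amplified to $1/3$ by a constant number of parallel repetitions (which multiplies the communication by a constant), we conclude $R^1(\MEMB_G) = O(\log |G|)$, as claimed. The $\Z_2 \ltimes \Z_2^n$ example in the introduction is then the special case where $A = \Z_2^n$ has index $2$.

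I do not expect any genuine obstacle here — the corollary is essentially a one-line consequence once the dimension bound is in place. The only mild subtlety worth stating carefully is the direction of the Gluck characterization actually used: we only need the ``easy'' direction (abelian subgroup of constant index $\Rightarrow$ bounded representation degree), which is classical and does not require Gluck's theorem at all; Gluck's contribution is the converse, which is not needed for this corollary. One should also note that ``family of groups'' here means the implied constants in the $O(\cdot)$ are uniform across the family, which is exactly what the uniform bound $d^0_{\max} \le c$ provides.
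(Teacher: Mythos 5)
Your proof is correct and follows essentially the same route as the paper: invoke Theorem~\ref{th:complex} and reduce to showing $d^0_{\max}=O(1)$ for any family with an abelian subgroup of bounded index. The one small refinement you add is also accurate: the paper cites Gluck's biconditional characterization, but the corollary only requires the elementary direction (abelian subgroup of index $c$ implies every complex irreducible has dimension at most $c$, via Frobenius reciprocity or Ito's theorem), which you correctly supply; Gluck's actual contribution, the converse, is not needed here.
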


\section*{Acknowledgments}
The authors are grateful to Keith Conrad, Iordanis Kerenidis, and Troy Lee for helpful discussions on this subject.

\newcommand{\etalchar}[1]{$^{#1}$}

\appendix
\section{Remarks on the relationship between $\C$ and $\F_p$ representations}
\label{sec:cde}

Let $G$ be a finite group of exponent $m$ (so $m$ is the smallest integer for which $g^m = 1$ for all $g \in G$). We outline a technique for reducing $\C$-representations of $G$ to $\F_p$-representations in a manner that preserves irreducibility. For a complete account, see~\cite{Serre77}. By a difficult theorem of Brauer (see, e.g., \cite{CR06}), one may always realize a $\C$-irrep over the field $\Q[\zeta_m]$ where $\zeta_m$ is a principal $m$th root of unity. (It is natural to guess that this might be true, as all eigenvalues of a representation of $G$ are $m$th roots of unity.) Let $\Z[\zeta_m]$ be the ring of algebraic integers in $\Q[\zeta_m]$ (it so happens that in this cyclotomic case $\Z[\zeta_m]$ is indeed the ring of algebraic integers). Let $p > 2$ be a prime, and let $\mathfrak{P} = \Z[\zeta_m](p))$; this is a prime ideal of $\Z[\zeta_m]$ lying over $p$ in the sense that $\mathfrak{P} \cap \Z = (p)$. Now, if only the representation could be realized over $\Z[\zeta_m]$, we could simply reduce mod $\mathfrak{P}$ and obtain a representation over an extension of $\F_p$. However, this is either not always true or just not known to be true by the authors. To fix the problem, one first localizes at $\mathfrak{P}$; that is, we consider the ring $\Z[\zeta_m]_\mathfrak{P}$ of all fractions with the property that the denominator lies outside $\mathfrak{P}$; this is a principal ideal domain with a single prime (and maximal) ideal $\mathfrak{P}$. In this case, the representation can be realized over $\Z[\zeta_m]_\mathfrak{P}$, as this PID generates the whole field as its field of fractions (see~\cite[\S73.6]{CR06}). Now we can reduce mod $\mathfrak{P}$; the result is a matrix realization over the field $\Z[\zeta_m]_\mathfrak{P}/\mathfrak{P}$; it is easy to check that this is an extension of the field $\F_p = \Z/(p)$. Furthermore, the dimension of this extension field is the multiplicative order of $p$ modulo $m$ (the same as the extension of the splitting field of the polynomial $X^{m} -1$ over $\F_p$). This immediately gives rise to a representation over the field $\F_q$ with $q = p^{\ord_m p} \leq p^{\phi(m)}$. We remark that this process preserves irreducibility, and induces a complete decomposition of $\F_p[G]$ into irreducible representations.

\end{document}